\documentclass[conference]{IEEEtran}
\IEEEoverridecommandlockouts

\usepackage{cite}
\usepackage{amsmath,amssymb,amsfonts}
\usepackage{algorithmic}
\usepackage{graphicx}
\usepackage{textcomp}

\usepackage{lipsum,framed}
\usepackage{tcolorbox}
\usepackage{amsmath}
\usepackage{bm}
\usepackage{float}
\usepackage[normalem]{ulem}

\usepackage{url}
\usepackage[colorlinks=true, linkcolor=blue, citecolor=blue, urlcolor=blue]{hyperref}

\usepackage[margin=1in]{geometry}
\usepackage{array}
\usepackage{booktabs}
\usepackage{multirow}

\usepackage{amsthm}
\usepackage{amssymb}

\usepackage{fancyhdr}

\usepackage{colortbl}      
\usepackage{caption}
\newtheorem{theorem}{Theorem} 
\newtheorem{corollary}{Corollary}
\newtheorem{lemma}{Lemma}

\newtheoremstyle{case}{}{}{}{}{}{:}{ }{}
\theoremstyle{case}

\usepackage{pgfplots}
\pgfplotsset{compat=1.16}

\usepackage{footmisc}
\usepackage{caption}
\usepackage{lineno}

\usepackage{enumitem}
\newlist{properties}{enumerate}{2}
\setlist[properties]{label=Property \arabic*., font=\textbf, itemindent=*}

\expandafter\def\expandafter\normalsize\expandafter{%
    \normalsize%
    \setlength\abovedisplayskip{0pt}%
    \setlength\belowdisplayskip{8pt}%
    \setlength\abovedisplayshortskip{2pt}%
    \setlength\belowdisplayshortskip{2pt}%
}

\def\BibTeX{{\rm B\kern-.05em{\sc i\kern-.025em b}\kern-.08em
    T\kern-.1667em\lower.7ex\hbox{E}\kern-.125emX}}

\fancypagestyle{FirstPage}{
\lfoot{\footnotesize \copyright2026 IEEE. Personal use of this material is permitted. Permission from IEEE must be obtained for all other uses, in any current or future media, including reprinting/republishing this material for advertising or promotional purposes, creating new collective works, for resale or redistribution to servers or lists, or reuse of any copyrighted component of this work in other works.} 
}
    
\begin{document}


\newcommand{\OC}[2]{\ensuremath{OC_{#1,#2}}}

\newcommand{\DevNG}[2]{\ensuremath{Deviation_{NG,#1,#2}}}

\newcommand{\CF}[4]{\ensuremath{CF_{#2}(#1,x_#3,...,x_#4)}}

\newcommand{\DevNN}[3]{\ensuremath{Deviation_{NN,#1#2,#3}}}

\newcommand{\PiNN}[0]{\ensuremath{\pi_{NN}}}

\newcommand{\PiNG}[0]{\ensuremath{\pi_{NG}}}

\newcommand{\OCwc}[1]{\ensuremath{OC_{#1,WC}}}

\newcommand{\maxDrift}[0]{\ensuremath{\rho}}

\newcommand{\DriftNode}[0]{\ensuremath{\rho_N}}

\newcommand{\DriftGuardian}[0]{\ensuremath{\rho_G}}

\newcommand{\AccuracyCF}[0]{\ensuremath{\alpha(\delta)}}

\newcommand{\tinj}[0]{\ensuremath{T_{INJ}}}

\newcommand{\DevDrift}[1]{\ensuremath{\Delta_{Drift,#1}}}

\newcommand{\DevTinj}[1]{\ensuremath{\Delta_{INJ,#1}}}

\newcommand{\Resync}[0]{\ensuremath{R_{int}}}

\newcommand{\CFGen}[2]{\ensuremath{CF_{#2}(#1)}}

\newcommand{\DriftDirRatio}[0]{\ensuremath{R_{drift}}}

\newcommand{\PiCF}[0]{\ensuremath{\pi_{conv}}}

\newcommand{\DeltaCF}[0]{\ensuremath{\delta}}

\newcommand{\DriftRelative}[2]{\ensuremath{\rho_{#1#2}}}

\newcommand{\DevTotal}[1]{\ensuremath{\Delta_{#1}}}

\newcommand{\TStartSlot}[2]{\ensuremath{T_{#1,#2}}}

\newcommand{\TStartCycle}[1]{\ensuremath{T_{START,#1}}}

\newcommand{\ReqSM}[0]{\ensuremath{\Delta_{SM}}}

\newcommand{\ReqSMMax}[0]{\ensuremath{\Delta_{SMMax}}}

\newcommand{\SlotMTs}[0]{\ensuremath{l_{slot}}}

\newcommand{\CorrSegMTs}[0]{\ensuremath{l_{corr}}}

\newcommand{\NSlotSM}[0]{\ensuremath{N_{slot}}}

\newcommand{\NCorrSegSM}[0]{\ensuremath{N_{corr}}}

\newcommand{\NTarget}[0]{\ensuremath{N_{TC}}}

\title{Exploring and Exploiting Synchrony Limitations of Time-Triggered Network-Agnostic Guardians
}

\author{
  \IEEEauthorblockN{
    Shreya Vithal Kulhalli, Mohammad Ibrahim Alkoudsi, Gerhard Fohler
    }
  \IEEEauthorblockA{
    RPTU Kaiserslautern-Landau\\
    Kaiserslautern, Germany \\
    \{kulhalli,alkoudsi,gerhard.fohler@rptu.de\}
    }
}

\maketitle

\begin{abstract}
Time-triggered communication protocols rely on trusted components known as guardians to enforce adherence to predetermined network schedules. 
Network-agnostic guardians offer an efficient and scalable distributed solution
with reduced implementation cost and complexity compared to network‑aware alternatives. 
However, this efficiency is based on the guardian’s dependence on the controlled node for clock synchronization, which introduces a vulnerability: a malicious node can exploit this dependency to launch timing attacks against its guardian and eventually interfere with messages from other nodes on the network.

In this paper, we establish a theoretical lower bound on the attainable clock synchronization precision between a node and its network-agnostic guardian. Building on this result, we introduce a timing attack that leverages the unavoidably imperfect clock synchrony to cause controlled and undetected de-synchronization of the guardian. The attack enables a malicious node to cause collisions with targeted critical network messages. 
We evaluate the effectiveness of the attack using a FlexRay field bus network model implemented in the OMNeT++ simulation framework. Our results show that the attack is able to remain undetected with 100\% success and disrupts the transmission of the critical messages of the target node by causing collisions with them with 100\% success.
\end{abstract}

\begin{IEEEkeywords}
distributed real-time systems, dependability, security, timing attacks.
\end{IEEEkeywords}

\thispagestyle{FirstPage}

\section{Introduction and Related Work}\label{sec:intro}
Time-triggered communication protocols (e.g., TTP~\cite{TTPStandard2011}, FlexRay~\cite{FlexRay2008}, TTEthernet~\cite{TTEthernetStandard2016}) are designed to efficiently deliver the determinism and fault tolerance required by safety-critical systems~\cite{Kopetz2003_TTA}. These protocols provide timely network access by establishing a global time base through bounded-precision clock synchronization among nodes and by organizing network accesses according to a \emph{Time-Division Multiple Access (TDMA)} strategy. In this scheme, each node is assigned exclusive slots on the channel and uses the full medium bandwidth during its slots. System designers compute the TDMA schedule offline and introduce gaps between consecutive slots to account for the unavoidably imperfect clock synchronization among nodes, preventing inter-node transmission overlap (i.e., collisions).

Fault isolation, and thus the preservation of determinism, is strengthened by bus guardians -- trusted hardware components connected to each node that enforce adherence to the specified communication schedule~\cite{Temple98_TemporalIsolation_Guardian,Bauer03_TWA,FlexRayBG2008,TTEthernetServices2009}. To contain faulty behavior, such as nodes attempting to transmit outside their TDMA slot boundaries, guardians typically require knowledge of the global time and the TDMA schedule. Achieving fault independence further demands that guardians resynchronize their local clocks to global time on their own. Consequently, contemporary \emph{network-aware} guardians incorporate protocol- and topology-specific mechanisms, which increases both cost and complexity. 

In contrast, network-agnostic guardians avoid protocol-specific logic. Originally proposed under constrained fault assumptions~\cite{Temple98_TemporalIsolation_Guardian}, these guardians rely on their controlled node for clock synchronization to infer schedule alignment, significantly cutting complexity and cost. However, modern systems increasingly operate in harsher environments over extended mission durations, facing higher fault rates and targeted cyber-physical attacks~\cite{ca-powerGridAttack2016,ca-waterBreach2008,ca-dronHack2012,ca-killjeep2015,ca-steelMill2014,ca-steelMill2015,ca-stuxnetLessons2011,ca-aurora2018,ca-hacksatellite2020}.

Our paper assumes the presence of adversaries capable of inferring timing information about the target system. This assumption is well supported by significant prior research that demonstrates the plausibility of this threat. Pellizzoni et al.~\cite{DBLP:conf/rtas/PellizzoniPYBMB15}, Chen et al.~\cite{DBLP:conf/rtas/ChenMPBK19}, and Liu et al.~\cite{DBLP:journals/access/LiuY20b} investigate threats emerging from covert side channels that adversaries can exploit to extract crucial timing information through timing-inference attacks. This assumption has also been adopted in methods targeting task scheduling rather than communication. Yoon et al.~\cite{DBLP:conf/rtas/YoonMCS16} introduce a schedule-obfuscation method that randomizes schedules while maintaining timing guarantees. Chen et al.~\cite{DBLP:journals/corr/abs-1806-01393} propose a protocol for obscuring predictable timing behavior in real-time systems using dynamic-priority scheduling. Krüger et al.~\cite{DBLP:conf/ecrts/KrugerVF18} pursue a similar motivation and investigate online and offline approaches for schedule randomization in time-triggered systems to mitigate timing-inference attacks.

Recent research has revealed that attacks compromising the \emph{synchronization protocol} pose a critical threat to determinism and fault isolation. For instance, PCSPOOF~\cite{Loveless2023} demonstrated the first successful breach of isolation in \emph{Time‑Triggered Ethernet (TTE)}, causing synchronization loss and dropping critical messages. This threat landscape reveals a critical vulnerability in network-agnostic guardians: their reliance on node-provided timing makes them susceptible to manipulation by a compromised node. In~\cite{syncguard2023}, Alkoudsi et al. demonstrated that under malicious threats, the dependency on the controlled node for synchronization enables time-domain attacks such as omission of relayed synchronization messages, replaying messages, and causing asymmetric reception that hinders fault attribution. They further proved that these attacks persist even with authenticated messaging.  

In this paper, we design an advanced replay attack that enables the malicious node to target and disrupt the transmission of a specific message in the schedule. The attack is based on the same principle as the replay attack in \cite{syncguard2023}, where the compromised node gradually introduces arbitrarily small delays into the synchronization signal sent to its guardian, subtly altering its local perception of global time.
In order to increase the effectiveness of the attack (as fast and stealthy as possible), we first derive a theoretical lower bound on the attainable precision of external clock synchronization between a node and its network-agnostic guardian, then compute the maximum covert per-resynchronization round injection delay that remains within this bound.  

By tracking global time progression and repeatedly applying such delay injections, the malicious node shifts the guardian’s view on global time until its slot aligns with the victim's slot. During the victim's slot, the malicious node transmits, causing a collision that invalidates the target critical message.

We evaluate our attack in the OMNeT++ network simulation library \cite{omnetppsimmanual,Virdis19_OmnetppBook} with the FiCo4OMNeT FlexRay model \cite{FiCo4OMNeT} on an eight‑node network, with each node extended with a network‑agnostic guardian. The simulation results confirm the attack’s effectiveness in disrupting messages in the target slot with $100\%$ success rate, underscoring the importance of coordination among network‑agnostic guardians and enforcing node rejuvenation when malicious behavior cannot be precluded, as discussed in \cite{syncguard2023}.

This work presents the following contributions:
\begin{enumerate}
    \item Precision bound: we derive a lower bound on the attainable precision of external clock synchronization between a node and its network-agnostic guardian.
    \item Target slot attack: we design a time-domain attack against network-agnostic guardians that causes collisions with a target critical message sent in a specific slot in the TDMA schedule.
    \item Experimental evaluation: we validate the attack using fault injection in a simulated FlexRay network (using OMNeT++ and FiCo4OMNeT's FlexRay model) augmented with network-agnostic guardians.
\end{enumerate}
  
The remainder of the paper is organized as follows. We state our system assumptions in Section~\ref{sec:system_model} and subsequently cover some essential background on synchronizing clocks to establish a global notion of time in Section~\ref{sec:background}. In Section~\ref{sec:ngprecision}, we derive a lower precision bound for external node–guardian synchronization, and exploit this bound, in Section~\ref{sec:attack}, to design our target slot attack. We present our experimental evaluation in Section~\ref{sec:evaluation}. We propose countermeasures that can be adopted to mitigate the target slot attack in Section~\ref{sec:countermeasures} before drawing our conclusions in Section~\ref{sec:conclusion}.

\section{System Model}\label{sec:system_model}

We consider real-time distributed systems, as shown in Figure~\ref{fig:system_model}, which comprise $N$ nodes that communicate by sending messages over a shared time-triggered network channels. Although we do not constrain our system to a specific time-triggered network protocol and topology, we assume that access to the network follows a TDMA strategy based on a predetermined schedule and a global notion of time. 
The schedule is organized into communication cycles, each consisting of a sequence of time slots, where a time slot is a fixed interval of global time that is uniquely identified by a (\emph{Slot ID}) and is assigned exclusively to at most one node. 
We further assume that each communication cycle ends with a dedicated slot or segment for clock correction. During this segment, no communication occurs. Instead, all correct nodes adjust their perception of global time by proportionally stretching the duration of the clock correction segment, based on the estimated differences between their local clocks. 

We assume that clock drift remains bounded by a known value \maxDrift{}, and denote local clock granularity by \emph{microtick ($\mu$T)}. We further assume that all nodes synchronize their local clocks, whether through internal synchronization mechanisms or by relying on an external time source, within a bounded precision \PiNN{}, for example, by using the fault-tolerant averaging algorithm \cite{Kopetz87_ClockSync}. This synchronization underpins the global time base between nodes, which ticks with coarser granularity than $\mu$T, termed \emph{macrotick (MT)}.

As shown in Figure \ref{fig:system_model},
each node incorporates a trusted, network‑agnostic guardian component that determines when the node is permitted to transmit. It enforces this by enabling or disabling the node’s send interface for a given communication channel through the \emph{transmission controller (TC)}. The guardian maintains its own notion of global time through an internal local clock, which it synchronizes with global time based on a periodic external clock synchronization (CS) signal emitted by the controlled node through a dedicated physical wire. The CS signal indicates the start of a communication cycle from the node's perspective. 

We assume that guardians perform a rudimentary acceptance window check on the CS signal, considering it valid only if it arrives within a time window centered at the expected arrival time of this signal from the guardian's perspective. The acceptance window has a minimum duration of 2\PiNG{}, accounting for potential clock drift in both directions. Here, \PiNG{} represents the synchronization precision between the node and its guardian. When accepted, the guardian aligns its clock with that of the node.

\begin{figure}[h]
  \centering
  \includegraphics[width=0.48\textwidth]{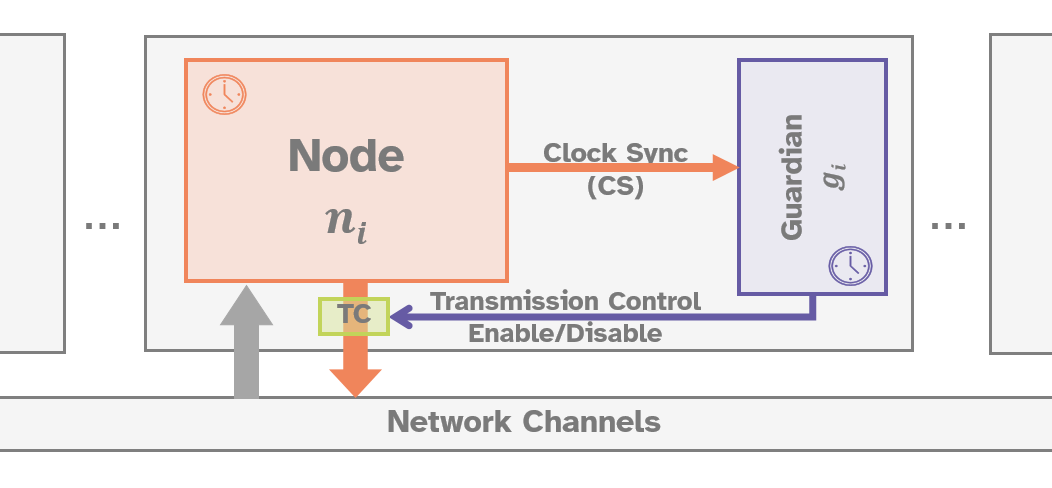}
  \caption{System Model}
  \label{fig:system_model}
\vspace{-0.5cm}\end{figure}

We adopt the same failure model as in~\cite{syncguard2023}, in which nodes may exhibit arbitrary failures, whereas guardians and TCs are limited to crash failures. When a guardian crashes, it permanently disables its node’s ability to transmit on the network, thereby enforcing fail‑silence. This behavior follows from the guardian’s role as a trusted component and from the inherent simplicity of network‑agnostic guardian designs.

\section{Background}\label{sec:background}
\subsection{Clock Synchronization and Global Time}\label{subsec:clocksync_gtb}

Time-triggered communication follows a TDMA schedule, defined using the granularity of the shared global time base. In order to correctly follow the schedule, all nodes must have a consistent notion about global time progression. This includes agreement on the duration of the $MT$, denoted as $g_{MT}$~\cite{Obermaisser2011_TTC,DBLP:KopetzS22_Clocks_Time}.

Internally synchronizing the local clocks of nodes to maintain consistent view of global time progression typically involves exchanging clock values, computing pairwise clock differences, and applying a \emph{convergence function (CF)} to bring clocks together. 
The CF is characterized by the following properties~\cite{Schneider1987_ByzClockSync}: 
\begin{enumerate}
    \item \textbf{Precision Enhancement $\pi_{conv}$}: The parameter, \emph{precision of CF}, quantifies how closely the clocks of correct nodes are synchronized after applying the CF. The CF achieves this by imposing an upper bound on the clock difference $\delta$ between any pair of correct nodes, such that $\delta \leq \PiNN{}$, where $\PiNN{}$ denotes the precision of internal clock synchronization. In order for the CF to successfully converge the clocks, the inequality, $\pi_{conv} < \delta$ has to be satisfied. 
    \item \textbf{Accuracy Preservation $\alpha(\delta)$}: The parameter, \emph{accuracy of CF}, defines the maximum permissible adjustment that a node may apply to its clock during a resynchronization cycle. The adjustment is constrained by the inequality $\alpha(\delta) \leq \delta$, ensuring that clock corrections remain within the bounds of the observed clock differences. 
\end{enumerate}

In TT communication protocols, each node typically adjusts its perception of global time by recalibrating the macrotick duration $g_{MT}$, based on the value produced by the convergence function. 
$g_{MT}$ is defined as an integer multiple of the node’s local clock granularity $\mu T$. 
The adjusted macrotick duration remains in effect until the subsequent resynchronization cycle.

\section{Precision of Node-Guardian Clock Synchronization}\label{sec:ngprecision}
Now, we turn to deriving a safe lower bound for the synchronization precision between a node and its network-agnostic guardian, under the assumptions stated in Section~\ref{sec:system_model}.

There are two primary sources that contribute to the clock deviation between a node and its network-agnostic guardian: 
\begin{enumerate}
    \item \textbf{Clock drift}: Refers to the gradual divergence between the node's and guardian's clocks due to the differences in oscillator frequencies. 
    Let $c_d$ represent the clock deviation between the guardian and its node due to clock drift within a single resynchronization cycle. 
    Since drift may occur in both directions, the maximum value of $c_d$ is $2\maxDrift{}\Resync{}$, 
    where \maxDrift{} denotes the upper bound on the clock drift rate, and \Resync{} is the duration of the resynchronization cycle. 
    This scenario occurs when the node experiences the maximum negative drift rate and the guardian the maximum positive drift rate. 
    Conversely, the minimum value of $c_d$ is $-2\maxDrift{}\Resync{}$, 
    representing the scenario in which the node has the maximum positive drift rate and the guardian the maximum negative drift rate. 
    \item \textbf{Node clock correction}:
    The internal clock synchronization of the node's clock to global time affects the difference between its clock and that of the guardian's. 
    Let $c_n$ denote the CF value applied by the node during a resynchronization cycle. 
    The value of $c_n$ lies within the range $[-\alpha(\delta), +\alpha(\delta)]$, where $\alpha(\delta)\leq\delta\leq\PiNN{}$ (See Sec.\ref{subsec:clocksync_gtb}). $c_n$ reaches the boundary values in scenarios where the node experiences the maximum possible clock drift in the opposite direction relative to all other correct node clocks, and consequently, relative to global time.
\end{enumerate}

\begin{lemma}
\label{lem:extern_clock_sync}
For any resynchronization cycle, the correction value $c_g$ applied to the guardian's clock upon accepting a CS signal by its node is determined by:
\begin{equation}
    c_g = c_n - c_d
    \label{eq:ng_cs}
\end{equation}
where $c_n$ is the correction value applied to the node's clock during internal synchronization and $c_d$ is the deviation due to clock drift accumulated between the node and the guardian within a resynchronization cycle. 
\end{lemma}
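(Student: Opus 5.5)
I will prove the lemma by bookkeeping the guardian's clock offset relative to the node's clock across one resynchronization cycle. Fix a cycle, let $t_0$ be the start of the cycle (the instant of the previous accepted CS signal), and let $t_1$ be the arrival of the next CS signal. Write $C_N(t)$ and $C_G(t)$ for the node's and the guardian's local readings of global time. The guardian aligns its clock with that of the node upon acceptance (Section~\ref{sec:system_model}), so I will start from $C_G(t_0)=C_N(t_0)$. The correction the guardian applies at $t_1$ is then $c_g = C_N(t_1) - C_G(t_1^-)$, the jump needed to realign with the node at the instant the CS signal marks the start of the new cycle. The only subtlety is the sign convention for $c_d$. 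Following the paper's description, $c_d$ is positive when the node drifts slow and the guardian fast. I will therefore fix $c_d = C_G(t_1^-) - (\text{node clock at } t_1 \text{ absent correction})$, i.e.\ the drift-induced excess of the guardian over the uncorrected node.

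The key steps, in order, are as follows.

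First, I decompose the node's reading at $t_1$ into its free-running part plus its correction. The node applies its CF value $c_n$ in the clock correction segment at the end of the cycle, before emitting the CS signal that starts the next cycle. Hence $C_N(t_1) = \tilde C_N(t_1) + c_n$, where $\tilde C_N$ denotes the node's clock had no correction been applied.

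Second, I express the guardian's free-running reading by the definition of $c_d$: $C_G(t_1^-) = \tilde C_N(t_1) + c_d$. Both clocks start equal at $t_0$ and accumulate drift only over $\Resync{}$, which is why $c_d\in[-2\maxDrift{}\Resync{},2\maxDrift{}\Resync{}]$.

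Third, I substitute both expressions into $c_g = C_N(t_1) - C_G(t_1^-)$. This gives $c_g = (\tilde C_N(t_1)+c_n) - (\tilde C_N(t_1)+c_d) = c_n - c_d$, which is \eqref{eq:ng_cs}.

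I will also note that the acceptance-window check must pass for any correction to be applied, so the lemma is conditional on acceptance, as stated.

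The main obstacle is not computational but definitional. I must state precisely that the CS signal is emitted after the node's correction takes effect, so that $c_n$ is fully reflected in $C_N(t_1)$. I must also ensure the sign of $c_d$ matches the paper's enumeration, where the maximal positive $c_d$ corresponds to a slow node and a fast guardian. I would handle this by explicitly stating these two conventions at the start of the proof, since a different ordering (correction after CS emission) would shift $c_n$ into the next cycle and change the identity.
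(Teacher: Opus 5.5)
Your proposal is correct and follows essentially the same argument as the paper: write the node's corrected start-of-cycle value as its uncorrected value plus $c_n$, write the guardian's expected value as that same uncorrected node value plus $c_d$, and subtract to obtain $c_g = c_n - c_d$. The conventions you fix explicitly agree with the paper, which leaves them implicit: the guardian is aligned with the node at the previous accepted CS signal, the correction takes effect before the CS signal is emitted, and $c_d$ is positive for a slow node and a fast guardian.
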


\begin{proof}

Let $t_{cs}^N$ denote the start-of-cycle time according to the node's clock before applying the CF. After applying the correction value $c_n$, the updated start-of-cycle time becomes $t_{cs'}^N = t_{cs}^N + c_n$.

Let $t_{cs}^G$ denote the start-of-cycle time according to the guardian's clock, then we can write: 
\[ t_{cs}^G = t_{cs}^N + c_d\]

The correction value $c_g$ is computed as the difference between the corrected time received from the node and the expected start-of-cycle time at the guardian:

\[
c_g = t_{cs'}^N - t_{cs}^G = (t_{cs}^N + c_n) - (t_{cs}^N + c_d) = c_n - c_d
\]
    
\end{proof}

From Lemma~\ref{lem:extern_clock_sync}, a safe clock precision 
for the externally synchronized network-agnostic guardians must be equal to or greater than the maximum possible value of $c_g$. 
Theorem~\ref{thm:c_g_upper_bound} demonstrates that
$c_g$ is bounded from above by the precision achieved through the internal clock synchronization between nodes \PiNN{}.
\begin{theorem}
\label{thm:c_g_upper_bound}
For any resynchronization cycle, the correction value $c_g$ applied to the guardian's clock upon accepting a CS signal by its node is bounded by: 

\begin{equation}
    |c_g| \leq \PiNN{}
    \label{eq:theorem_2}
\end{equation}

\end{theorem}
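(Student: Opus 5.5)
The plan is to start from Lemma~\ref{lem:extern_clock_sync}, which gives $c_g = c_n - c_d$, and bound the right-hand side. The naive triangle inequality, $|c_g| \leq |c_n| + |c_d| \leq \AccuracyCF{} + 2\maxDrift{}\Resync{}$, is too weak, because it treats the two contributions as independent. So the key step is to show that the extremes of $c_n$ and $c_d$ cannot align adversarially, and that their combination is governed by the clock difference $\delta$ that the convergence function must bound.

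First I will express both quantities in terms of the drift of the node's clock relative to global time. The node's correction $c_n$ compensates for the deviation its clock has accumulated relative to global time over one resynchronization interval \Resync{}. Similarly, $c_d$ is the drift-induced deviation between the node and the guardian over the same interval. Let $d_N$ and $d_G$ be the accumulated deviations of the node's and the guardian's clocks from global time. Then $c_d = d_G - d_N$, while the CF correction satisfies $c_n \approx -d_N$, up to the convergence error bounded by $\pi_{conv}$. Substituting gives $c_g = c_n - c_d \approx -d_G$ plus a residual. The node-drift terms cancel, so the guardian's correction essentially measures the guardian's own deviation from global time as relayed through the node, plus the node's residual synchronization error.

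Next I will bound this quantity. The node's corrected clock lies within the internal precision of global time, since after applying the CF every correct node is within \PiNN{} of all others, and global time is represented by the ensemble of correct nodes. The guardian's clock at the start of the cycle equals the node's previously corrected clock, shifted only by drift over \Resync{}. So $c_g$ is the difference between two readings of the node's notion of the cycle start: one just after correction, and one propagated by the guardian's oscillator from the previous cycle. Both readings lie in the window of clock values that correct nodes may hold at the cycle boundary, and that window has width at most \PiNN{}. Using $\AccuracyCF{} \leq \delta \leq \PiNN{}$ from the CF properties in Section~\ref{subsec:clocksync_gtb}, together with the requirement that drift over one interval is already absorbed in $\delta$ (so $2\maxDrift{}\Resync{} \leq \PiNN{}$ in any configuration where internal synchronization succeeds), I will conclude $|c_g| \leq \PiNN{}$. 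I will handle the two signs separately: take the case where the node is at the maximum negative drift and the guardian at the maximum positive drift, then the symmetric case.

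The main obstacle is the cancellation step. The paper states the ranges of $c_n$ and $c_d$ separately, so I must argue rigorously that $c_n$ and $c_d$ have correlated signs. If the node drifts, so that $c_d$ is large, then the CF correction $c_n$ moves in the same direction. Their difference is therefore bounded by the internal precision rather than by a sum. My first attempt will model global time as a reference clock with zero drift, write every deviation explicitly, and use the assumption that the guardian is reset to the node's corrected clock each cycle.
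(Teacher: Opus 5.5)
Your cancellation step is the right idea, and it is what the paper's case table encodes. Write $c_n=-d_N+J$, where $J$ is the node's residual after correction, and $c_d=d_G-d_N$. Then $c_g=J-d_G$, and the paper's Cases A--C are this expression evaluated at extreme drift configurations.

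The gap is in the final quantitative step. The facts you invoke are $\alpha(\delta)\le\delta\le\pi_{NN}$, $|J|\le\pi_{conv}<\delta$, and $2\rho R_{int}\le\pi_{NN}$. Together they only give $|c_g|\le|J|+|d_G|<\pi_{NN}+2\rho R_{int}$, which is the crude bound you set out to beat; knowing $2\rho R_{int}\le\pi_{NN}$ does not shrink it. The paper closes the argument with an extra design assumption in its Case~C. Once the node's drift relative to global time is accounted for, the remaining part of $c_n$ (jitter and faults) is at most $\pi_{NN}-2\rho R_{int}$. Hence $|c_g|\le(\pi_{NN}-2\rho R_{int})+2\rho R_{int}=\pi_{NN}$. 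You must state this assumption, or a synchronization condition $\pi_{conv}+2\rho R_{int}\le\pi_{NN}$ that implies it.

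Also, global time must be modeled as the ensemble of correct clocks, which may itself drift at $\pm\rho$ as in the paper's table, not as a drift-free reference. The correction $c_n$ moves the node toward the ensemble. With a drift-free reference, $c_n\approx-d_N$ is off by the ensemble's own drift. The term you actually need to bound is the guardian's deviation from the ensemble, which can reach $2\rho R_{int}$.

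Your window argument does not fill the gap as written; it essentially asserts the conclusion. $\pi_{NN}$ bounds differences between correct nodes in the same run. The guardian is not one of them, so its clock may lie outside the actual spread of the correct clocks. Knowing that each reading is within $\pi_{NN}$ of every correct clock only bounds their mutual distance by roughly $2\pi_{NN}$.

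To make this route work you need two facts. First, (i): at the cycle boundary, the guardian's clock is within $\pi_{NN}$ of every correct node's pre-correction clock. This follows from $\pi_{conv}+2\rho R_{int}\le\pi_{NN}$, because the guardian starts the cycle equal to the node's corrected clock. Second, (ii): the node's corrected clock lies within the range of the correct clocks. This is a validity property of the CF that is not among the stated ones, since $\alpha(\delta)\le\delta$ bounds only the size of the correction.

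Given (i) and (ii), the guardian is within $\pi_{NN}$ of both endpoints of that range, hence of the corrected node clock, and $|c_g|\le\pi_{NN}$ follows. This version also absorbs the offset the guardian inherits from the node's previous residual. Your formula hides that offset inside $d_G$, and the paper's one-cycle table does not model it explicitly.
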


\begin{proof}
Considering Equation~\ref{eq:ng_cs}, the maximum value of $c_g$ occurs when $c_n$ reaches its maximum positive value and $c_d$ reaches its maximum negative value. 

The maximum $c_n$ occurs when the node's clock has the maximum negative drift rate relative to all other nodes (representing global time), which simultaneously exhibit the maximum positive drift rate. In addition, messages between nodes are subjected to the worst-case jitter and faults. Since $\alpha(\delta)\leq\delta\leq\PiNN{}$, in the worst-case we have  $c_n=\PiNN{}$. 
In contrast, the minimum $c_d$ occurs when the node's clock exhibit the maximum positive drift rate relative to the guardian's clock, while the guardian's clock simultaneously experiences the maximum negative drift rate, yielding: $c_d=2\maxDrift{}\Resync{}$
Since the node's clock cannot possibly have the maximum positive and negative drift rates simultaneously, i.e., these scenarios are mutually exclusive, we have $|c_g| < 2\maxDrift{}\Resync{} + \PiNN{}$. 

In Table~\ref{tab:devCases}, we present the extreme cases of clock drift, illustrating opposing deviations among the node, the guardian, and the global time (collectively representing all other nodes in the system). 
For each case, the corresponding values of $c_d$, $c_n$, and $c_g$ are derived. 
In \emph{Case C}, where the node and global time exhibit drift in the same direction, the value of $c_n$ is determined solely by additional sources of uncertainty in the network such as communication jitter and faults, whose effects are bounded by the interval $|\PiNN{}-2\maxDrift\Resync|$ based on the design assumptions of internal clock synchronization. 

Since all the extreme drift cases, shown in Table~\ref{tab:devCases}, satisfy $-\PiNN{} \leq c_g \leq \PiNN{}$, Equation~\ref{eq:theorem_2} holds.

\begin{table}[ht!]
\centering
\setlength{\arrayrulewidth}{0.2mm}
\setlength{\tabcolsep}{2pt}
\renewcommand{\arraystretch}{0.9}
\resizebox{0.5\textwidth}{!}{
\begin{tabular}{|c|l|l|l|l|l|}
\hline
\textbf{Cases} & \textbf{Clock} & \textbf{Clock Drift} & $c_d$ & $c_n$ & $c_g$ \\
\hline
\multirow{3}{*}{A} 
  & Node        & $\mp\maxDrift{}$ & \multirow{3}{*}{$0$} & \multirow{3}{*}{$\pm\PiNN{}$} & \multirow{3}{*}{$\pm\PiNN{}$} \\
  & Guardian    & $\mp\maxDrift{}$ &     &         &         \\
  & Global time & $\pm\maxDrift{}$ &     &         &         \\
\hline
\multirow{3}{*}{B} 
  & Node        & $\mp\maxDrift{}$ & \multirow{3}{*}{$\pm 2\maxDrift \cdot \Resync$} & \multirow{3}{*}{$\pm\PiNN$} & \multirow{3}{*}{$\pm\PiNN \mp 2\maxDrift \cdot \Resync$} \\
  & Guardian    & $\pm\maxDrift{}$ &     &         &         \\
  & Global time & $\pm\maxDrift{}$ &     &         &         \\
\hline
\multirow{3}{*}{C} 
  & Node        & $\pm\maxDrift{}$ & \multirow{3}{*}{$\mp2\maxDrift \cdot \Resync$} & \multirow{3}{*}{$\pm\PiNN \mp 2\maxDrift \cdot \Resync$} & \multirow{3}{*}{$\pm\PiNN$} \\
  & Guardian    & $\mp\maxDrift{}$ &     &         &         \\
  & Global time & $\pm\maxDrift{}$ &     &         &         \\
\hline
\end{tabular}
}
\caption{Extreme cases of clock drift among the node, the guardian, and the global time, and corresponding values for $c_d$, $c_n$ and $c_g$.}
\label{tab:devCases}
\vspace{5pt} 
\end{table}
\vspace{-2em}

\end{proof}

\begin{corollary}
    The synchronization precision between a node and its network-agnostic guardian satisfies: 
    \begin{equation}\label{eq:precision_ng_lbound}
    \PiNG{} \geq \PiNN{}
    \end{equation}
\end{corollary}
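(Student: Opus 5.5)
The corollary follows from Theorem~\ref{thm:c_g_upper_bound} and the acceptance-window mechanism of Section~\ref{sec:system_model}. The definition of \PiNG{} we need is operational: it is the largest deviation between the node's (corrected) start-of-cycle signal and the guardian's expected start-of-cycle time that the guardian must tolerate. Equivalently, the acceptance window of width $2\PiNG{}$ must contain every CS signal sent by a correct node. If it did not, a correct node could have its CS signal rejected, its guardian would lose synchronization, and the guardian would eventually fail-silence a correct node. That violates the requirement that guardians only contain faulty behaviour. So the plan is to reduce the corollary to showing that $\sup |c_g|$ over all admissible executions with a correct node is exactly \PiNN{}, not merely at most \PiNN{}.

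The steps are as follows. First, by Lemma~\ref{lem:extern_clock_sync}, the offset observed by the guardian when the CS signal arrives is exactly $c_g = c_n - c_d$. Safety therefore requires $\PiNG{} \geq |c_g|$ for every admissible execution. Second, Theorem~\ref{thm:c_g_upper_bound} gives $|c_g| \leq \PiNN{}$, so $\PiNG{} = \PiNN{}$ is sufficient. Third, and this is the direction the corollary actually asserts, I need to show the bound is attained. For this I will use the extreme scenarios in Table~\ref{tab:devCases}. In Case~A the node and guardian drift identically, so $c_d = 0$. The node drifts opposite to global time with worst-case jitter, so $c_n = \pm\PiNN{}$, using $\alpha(\delta) \leq \delta \leq \PiNN{}$ with equality in the worst case. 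This gives $|c_g| = \PiNN{}$. Case~C also gives $|c_g| = \PiNN{}$, which confirms the value is robust across drift configurations. Hence any $\PiNG{} < \PiNN{}$ would reject the CS signal of a correct node in Case~A, and so $\PiNG{} \geq \PiNN{}$.

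I expect the main obstacle to be the realizability of Case~A, namely that the convergence function can actually apply a correction of magnitude \PiNN{} in one cycle. The CF axioms only give $\alpha(\delta) \leq \delta$, which is an upper bound. I would argue as follows. \PiNN{} is by definition the tight worst-case precision of internal synchronization, so some admissible execution has an observed difference of $\delta = \PiNN{}$. In such an execution, with no compensating guardian drift ($c_d = 0$), the node's correction toward global time equals that full difference. If the CF could only ever correct strictly less, the residual error would contradict tightness of \PiNN{} in the following cycles. A more careful treatment would take a supremum over executions and conclude $\PiNG{} \geq \sup|c_g| = \PiNN{}$. That still yields the stated non-strict inequality even if the extreme value is only approached and never exactly attained.
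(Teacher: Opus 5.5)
Your proposal is correct and is essentially the paper's argument. The corollary comes from Lemma~\ref{lem:extern_clock_sync} and Theorem~\ref{thm:c_g_upper_bound}, with the extreme Cases~A and~C of Table~\ref{tab:devCases} (where $|c_g| = \PiNN{}$ is attained) supplying the lower bound on \PiNG{}. You make this step more explicit than the paper's one-line ``maximum guardian clock correction value'' remark.

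Your extra tightness argument for realizability of $c_n = \PiNN{}$ is not sound as stated. A convergence function that corrects only part of the observed difference, leaving a residual up to $\pi_{conv}$, is perfectly compatible with \PiNN{} being tight. You do not need that argument, though, since the paper simply takes $c_n = \pm\PiNN{}$ as the worst case in the proof of Theorem~\ref{thm:c_g_upper_bound}.
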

The proof follows directly from Theorem~\ref{thm:c_g_upper_bound}, as the maximum guardian clock correction value establishes this precision bound. Consequently, the inter-slot gaps in the TDMA schedule must be determined with respect to \PiNG{} rather than \PiNN{}. 

Note that $\PiNG{} = \PiNN{}$ represents the best-case scenario for network-agnostic guardians. Depending on the physical deployment, however, additional factors such as CS signal propagation delay and processing jitter can result in $\PiNG{} > \PiNN{}$.
This has direct implications for feasibility of the target slot attack: Any timing attack must operate within the $\PiNG{}$ bound to avoid detection.

\section{Target Slot Attack}\label{sec:attack}
\subsection{Attacker Model}\label{subsec:attack_assumptions}

We consider an intelligent and malicious adversary capable of assuming complete control over a node, including sending messages over the network and performing clock synchronization. 
The adversary correspondingly has access to the same information as the node itself, 
including critical timing parameters associated with clock synchronization (precision, maximum drift rate, resynchronization cycle) and the communication schedule. 
Hence, the adversary knows which slots are designated for the transmission of critical messages. These assumptions are supported by analyses of adversaries capable of inferring crucial timing information through timing inference attacks \cite{DBLP:conf/rtas/ChenMPBK19,DBLP:journals/access/LiuY20b,DBLP:conf/ecrts/KrugerVF18}, as discussed in Section~\ref{sec:intro}.

The primary objective of the adversary is to disrupt the transmission of a critical message in the schedule. This objective can only be achieved by manipulating the global time perception of the guardian of the compromised node such that its assigned TDMA slot coincides with that of the critical message.
Achieving this objective requires the adversary to only evade detection by the guardian.

\subsection{Attack Procedure}\label{subsec:attack_proc}

The procedural flow of the target slot attack is presented in Figure~\ref{fig:aw_attack_flow}. 
In accordance with \cite{syncguard2023}, the compromised node initiates the attack by creating a \emph{virtual clock} that is initialized with the value of the compromised node's clock and remains unsynchronized with the global time. 
The virtual clock is employed by the node to monitor the progression of the attack. 
Afterwards, the node halts message transmission on the network, 
thereby leading other nodes in the cluster to infer its fail-silence. 
Note that this step is optional, as the objective is only to cause a collision with a specific message. 
Subsequently, the node enters a repetitive cycle in which, 
it computes and applies the maximum delay that can be covertly introduced with the CS signal $T_{inj}$ during each resynchronization cycle (described in detail in Sec.~\ref{subsec:calc_tinj}), 
followed by checking node's slot overlap with the target slot with reasonable accuracy (considering precision \PiNN{}). 
Once slot overlap is achieved, the node transmits a message on the network that collides with and invalidates the targeted critical message. 

\begin{figure}[h]
  \centering
  \includegraphics[width=.5\textwidth]{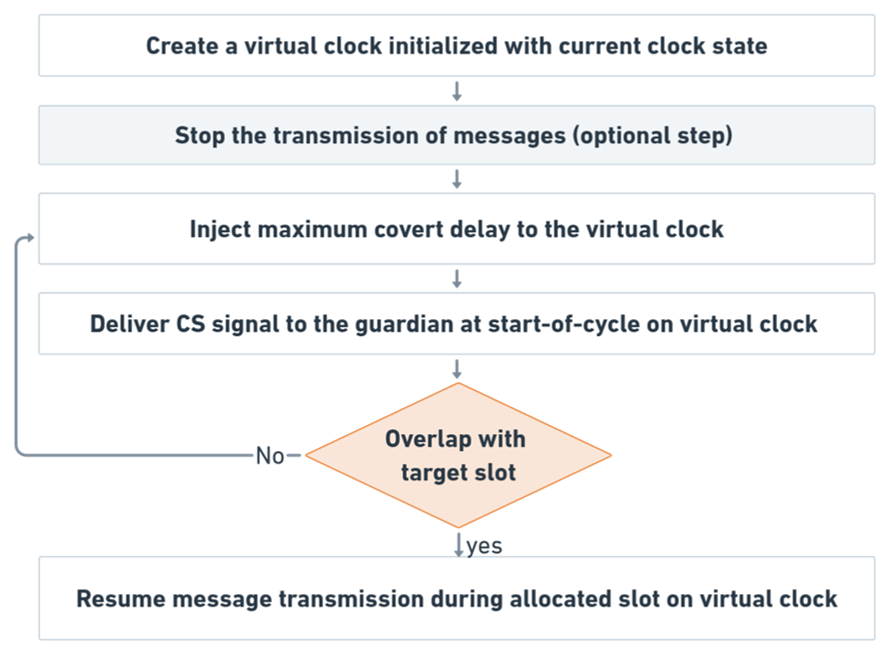}
  \caption{The procedural flow of the target slot attack.}
  \label{fig:aw_attack_flow}
\vspace{-0.6cm}\end{figure}

\subsection{Maximum Covertly Injected Delay}\label{subsec:calc_tinj}

The maximum delay that the node may inject when delivering the CS signal to its guardian without being detected by its acceptance window check is limited by the maximum clock deviation between their clocks within a resynchronization cycle, that is $2\maxDrift{}\Resync{}$, and the clock synchronization precision between the node and its guardian $\PiNG{}$.

Consider the example in Figure~\ref{fig:tinj_val}, which illustrates one communication cycle (\emph{cycle i}) of duration $\Resync{}$. In this cycle, \emph{node A} and its network-agnostic \emph{guardian A} experience the maximum clock drift rates $-\maxDrift{}$ and $+\maxDrift{}$, respectively. As a result, the node and the guardian accumulate a clock deviation of $2\maxDrift{}\Resync{}$. At the end of \emph{cycle i}, \emph{node A} injects a delay of \tinj{} together with the CS signal to \emph{guardian A}. 
In order to pass the guardian's acceptance window check, $\tinj{}$ must satisfy the inequality: $2\maxDrift{}\Resync{} + \tinj{} \leq \PiNG{}$. 
Reordering the terms, we obtain:

\begin{gather}
    \tinj{} = \PiNG{} - 2\maxDrift{}\Resync{}
    \label{eq:tnj_max}
\end{gather}

Since $\PiNG{} > 2\maxDrift{}\Resync{}$ the node is always able to covertly inject delays along with the CS signal. 
Note that injecting larger values than in Eq.~\ref{eq:tnj_max} may also go undetected depending on the clock deviation between the node and its guardian. However, since the actual clock drift values are uncertain, Eq.~\ref{eq:tnj_max} represents an upper bound on the delay that can be covertly introduced during each resynchronization cycle regardless of the clock deviation between the node and its guardian. 

\begin{figure}[h]
  \centering
  \includegraphics[width=.49\textwidth]{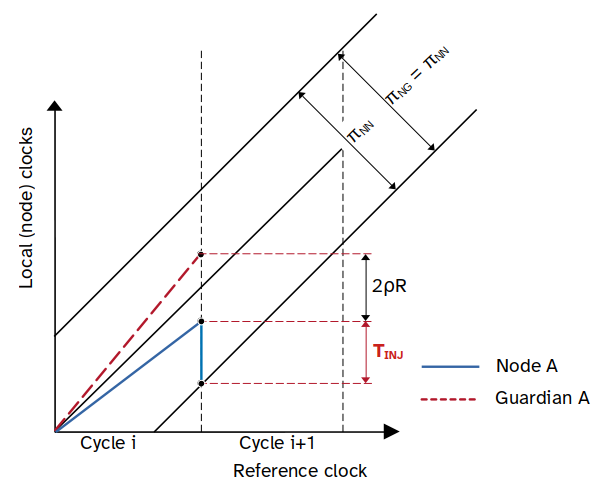}
  \caption{Illustration of the maximum delay \tinj{} that can be covertly introduced during each resynchronization cycle, independent of the actual clock deviation between the node and the guardian.}
  \label{fig:tinj_val}
\vspace{-1cm}\end{figure}

\subsection{Required Clock Deviation for Slot Overlap}\label{subsec:reqSM_nTC}

The clock deviation required (at the time of initiation of the attack) for an attacker’s slot to overlap with that of a victim node is influenced by the relative positioning of these two slots within the communication cycle and the direction of the injected delay.

For a given direction of injected delay, 
the necessary clock deviation is determined by evaluating the portion of the communication cycle that must be traversed.
For instance, given a TDMA schedule comprising 10 slots, if the attacker occupies Slot 5 and the victim occupies Slot 4, a positive injected delay necessitates traversing nine slots plus the correction segment, whereas a negative delay requires traversing only one slot. Consequently, the attacker selects the delay direction that minimizes the required clock deviation, thereby accelerating the attack's execution.

The required clock deviation for slot overlap in macroticks (MT) is given by:
    \[\NSlotSM{}\SlotMTs{} + \NCorrSegSM{}\CorrSegMTs{} \pm 1\]
    
where,
\NSlotSM{} denotes the \emph{number of slots to be traversed},
\SlotMTs{} represents the slot size in MT. 
\NCorrSegSM{} denotes the \emph{correction segment count} (set to 1 if a correction segment must be traversed and 0 otherwise).
$\CorrSegMTs{}$ represents the size of the correction segment in MT.
The additional one accounts for inherent imperfections in clock synchronization, with a positive sign for positive injected delays and a negative sign for negative delays.

It is important to note that, before runtime, the attacker cannot precisely predict when slot overlap with the victim node will occur, since it cannot forecast future clock differences. Consequently, in each resynchronization cycle, the attacker must rely on the current clock deviation relative to the victim node as an indicator, and iteratively compare it to the required clock deviation until slot overlap is successfully achieved. Since we already account for imperfect synchronization in the required clock deviation, identification of slot overlap is accurate within $\pm \PiNN{}$. Therefore, once slot overlap is identified, it is certain that the message sent by the attacker during its slot will collide with the victim node's message.

\section{Experimental Evaluation}\label{sec:evaluation}
In this section, we evaluate the effectiveness of the target slot attack through simulation-based experiments. The experimental configuration employs the OMNeT++ simulation library~\cite{omnetppuserguide}, augmented with the FiCo4OMNeT~\cite{FiCo4OMNeT,Buschmann13_FiCo4OMNeT} extension, to instantiate a FlexRay network comprising eight nodes (\emph{FRNode})s that communicate in accordance with the TDMA schedule presented in Table~\ref{tab:nwschedule} via the FlexRay bus (\emph{FRBus}). One execution of the TDMA schedule represents a communication cycle that ends with the clock correction segment, also known in the FlexRay protocol as the \emph{network idle time (NIT)} segment.

We extend each \emph{FRNode} with a custom \emph{guardian} submodule that follows the structure and assumptions specified in Section~\ref{sec:system_model}. 
Additionally, we implement a module, \emph{FRAttackerNode}, to represent a compromised node that performs the target slot attack. 
In the instantiated network, \emph{unitE} represents the compromised node, which is allocated the 5th slot in the communication cycle.  

\begin{table}[h!]
\centering
\arrayrulecolor{gray} 
\setlength{\arrayrulewidth}{0.2mm}
\setlength{\tabcolsep}{2.5pt}
\renewcommand{\arraystretch}{0.9}
\begin{tabular}{|c|c|c|c|c|c|c|c|c|c|c|c|}
\hline
\textbf{Slot ID} & 1 & 2 & 3 & 4 & \textbf{5} & 6 & 7 & 8 &  \multirow{2}{*}{\textbf{NIT}} \\
\cline{1-9}
\textbf{Node} & unit1 & unit2 & unit8 & unit5 & \cellcolor{red!20}{unitE} & unit7 & unit6 & unit3 &  \\
\hline
\end{tabular}
\vspace{0.5mm}
\caption{TDMA schedule for the instantiated FlexRay network.}
\label{tab:nwschedule}
\end{table}

The network parameters in OMNeT++ are configured using their default values and units\footnote[1]{The units are abbreviated as follows: $\mu s$ – microseconds, $ns$ – nanoseconds, $ps$ – picoseconds, $MT$ – macroticks, $\mu T$ – microticks.}, except in cases where Table~\ref{tab:nwparams} provides a different choice. For completeness, the table also includes default parameters that, while unchanged, are relevant to the discussion in this section.

\begin{table}[h]
\centering
\arrayrulecolor{gray} 
\setlength{\arrayrulewidth}{0.1mm}
\setlength{\tabcolsep}{2.25pt}
\renewcommand{\arraystretch}{0.9}
\begin{tabular}{|l|l|l|}
\hline
\textbf{Network parameter} & \textbf{Value} & \textbf{Reasoning} \\
\hline
pdMicrotick & 12.5 $ns$ & FiCo4OMNeT default \\
\hline
gdMacrotick & 1 $\mu s$ & FiCo4OMNeT default \\
\hline
gdStaticSlot & 4 $MT$& FiCo4OMNeT default \\
\hline
gdNIT & 2 $MT$& FiCo4OMNeT default \\
\hline
numberOfNodesChannelA & 8 & Number of nodes, \\
 & & $N\geq3F+1$, \\
\cline{1-2}
numberOfNodesChannelB & 8 & $F=1$ faulty node\\ 
 & & $\rightarrow{} N \geq 4$  \\
\hline
gNumberOfStaticSlots & 8 & to provide each node\\
 & & in the communication cycle\\
 & & with a designated slot \\
\hline
gCycleCountMax & 7 & smallest value configurable \\
 & & in FiCo4OMNeT for a \\
 & & concise TDMA schedule\\
\hline
pOffsetCorrectionOut\textsuperscript{\ref{fn:recalc_param}} & 80 $\mu T$ & the maximum offset \\
 & & correction is limited \\
 & & by the maximum clock \\
 & & deviation between the \\
 & & internally synchronized \\
 & & node clocks, \PiNN{}\\
\hline
maxDrift & 18.75 $ps$ & maximum clock deviation \\ 
 & & due to drift between\\
 & & two non-synchronized clocks\\
 & &  in a single microtick.\\
\hline
maxDriftChange & 1 $ps$ & based on example \\
 & & network \textit{syncnodes} from \\
 & & FiCo4OMNeT\\
\hline
gdActionPointOffset & 2 $MT$ & to ensure transmission \\ 
 & & window alignment \cite{Bauer03_TWA}, for\\
 & & $\PiNN{} = \text{1 }MT$.\\
\hline
channel.delay\textsuperscript{\ref{fn:recalc_param}} & 127.94 $ns$ & sets maximum propagation\\
 & & delay for deriving the \\
 & & parameter aAssumedPrecision\\
\hline
\end{tabular}
\vspace{1.0mm}
\caption{Network configuration parameters for the simulated network model.}
\label{tab:nwparams}
\end{table}

The parameter \textit{maxDrift} is defined by the FiCo4OMNeT simulation model to simulate the clock drift phenomenon. We arrive at the configured value using \textit{pdMicrotick} and the maximum clock frequency deviation considered for FlexRay, \textit{cClockDeviationMax} (0.0015 $s/s$) \cite{ProtocolSpecfication_Flexray_3_0_1}. The parameter, \textit{pdmaxDrift} (not to be confused with \textit{maxDrift}), refers to the maximum clock deviation between two nodes with non-synchronized clocks for one communication cycle, due to clock drift alone. It is also referred to as \textit{Drift Offset} ($\Gamma$) \cite{DBLP:KopetzS22_Clocks_Time}. 

Table~\ref{tab:derparams} lists key network parameters derived from the configured network parameters, connecting the network simulation to the theoretical understanding of a time-triggered network.
To set \PiNN{} in FlexRay, the parameter \emph{aAssumedPrecision} must be configured. This parameter, given in Table~\ref{tab:derparams}, can be derived using the largest microtick and the total propagation delay, \emph{adPropagationDelayMax} (See equations 11 and 13 in the FlexRay's protocol specifications\cite{ProtocolSpecfication_Flexray_3_0_1}).
\emph{adPropagationDelayMax} is indirectly configured using the parameter \textit{channel.delay}\footnote[2]{This value is recalculated for every configuration in experiment~\ref{subsec:test_tinj} \label{fn:recalc_param}}. We assume $aAssumedPrecision = \text{1 } MT$ \textsuperscript{\ref{fn:recalc_param}} as the default and maximum value, to meet the reasonableness condition of the global time base \cite{DBLP:KopetzS22_Clocks_Time}. We assume $\PiNG{} = \PiNN{}$, which represents the best-case scenario for network-agnostic guardians, as shown in the proof in Section~\ref{sec:ngprecision}.

\begin{table}[h]
\centering
\arrayrulecolor{gray}
\setlength{\arrayrulewidth}{0.2mm}
\setlength{\tabcolsep}{2.25pt}
\renewcommand{\arraystretch}{1.0}
\begin{tabular}{|l|l|l|}
\hline
\textbf{Network parameter} & \textbf{Basis} & \textbf{Value} \\
\hline
gClockDeviationMax(\maxDrift{}) & cClockDeviationMax & 0.0015 $s/s$ \\
\hline 
aMicroPerMacroNom & \multirow{2}*{\normalsize $\frac{\text{gdMacrotick}}{\text{pdMicrotick}}$}  & 80 $\mu T$ \\
($MT/uT$) & & \\
\hline 
pMicroPerCycle(\Resync{}) & (gdNIT + gdStaticSlot $\times$ & 2720 $\mu T$ \\
 & gNumberOfStaticSlots) & \\
 & $\times$ aMicroPerMacroNom & \\
\hline 
pdmaxDrift/ Drift Offset & ceil(2$\times \frac{\text{gClockDeviationMax}}{\text{(1-gClockDeviationMax))}}$ & 9 $\mu T$ \\
($\Gamma = \lceil 2\maxDrift{}\Resync{} \rceil$) & $\times$ gMacroperCycle)& \\
\hline
adPropagationDelayMax\textsuperscript{\ref{fn:recalc_param}} & 2 $\times$ channel.delay & 255.88 $ns$ \\
\hline 
aAssumedPrecision & See Section~\ref{sec:evaluation} & 1 $MT$ \\
(\PiNN{}) & & \\
\hline
\end{tabular}
\vspace{1.0mm}
\caption{Derived network parameters (using Table~\ref{tab:nwparams}) for the simulated network model.}
\label{tab:derparams}
\end{table}

Each experiment comprises multiple simulation configurations, with each simulation running for 147 communication cycles (sim-time-limit~\cite{omnetppsimmanual}) and repeated 500 times. The configuration parameters are specified using their default units or as a fraction of \PiNN{}.

In Section~\ref{subsec:test_tinj}, we examine how the clock synchronization precision between the attacker node and its guardian, \PiNG{}, influences the covertness of the target slot attack. In Section~\ref{subsec:test_ntc}, we evaluate the attack procedure from Section~\ref{subsec:attack_proc} for every possible victim node for our network model. The relative slot position of the victim node with respect to the attacker's slot determines the clock deviation required for slot overlap. Taking this into account, and assuming a fixed \PiNG{} along with a random clock drift drawn from the interval $[0,\maxDrift{}]$ at the beginning of each communication cycle, we evaluate whether the attacker node can successfully achieve slot overlap with the victim node.

\subsection{Maximum Covertly Injected Delay}\label{subsec:test_tinj}

The purpose of the first experiment is two-fold:
\begin{enumerate}
    \item to validate, through simulation, the theoretical expression in Equation~\ref{eq:tnj_max}, which represents the maximum delay that can be injected covertly with the CS signal from the node to its guardian,
    \item to show that, attacks using injected delay greater than maximum covert injected delay can remain undetected with varying success, depending on the choice of the precision for node-guardian clock synchronization.
\end{enumerate}

We evaluate three simulation configurations by setting three values of \PiNN{}, and correspondingly \PiNG{}, and then varying \tinj{} in the range $[0,\PiNG{}+\maxDrift{}\Resync{}]$ for each value of \PiNG{}. The upper limit of the range of injected delay accounts for either the node or guardian local clock drifting maximally, while reflecting the low probability of both clocks drifting maximally in opposite directions. 

According to the constraints for aAssumedPrecision \cite{ProtocolSpecfication_Flexray_3_0_1}, if the timing inaccuracies of the physical layer are considered negligible ($adPropagationDelayMax = 0$), $aAssumedPrecision \approx \text{0.5 }MT$. Consequently, we assume $aAssumedPrecision \geq 0.5MT$ for the configurations, though the specific values selected are representative.

The precision of internal clock synchronization, \PiNN{} depends on~\cite{DBLP:KopetzS22_Clocks_Time},
\begin{itemize}
    \item the maximum divergence of any two clocks, or drift offset $\Gamma$,
    \item the convergence function, $\phi$ which, in turn, depends on Byzantine error and the latency jitter of synchronization messages due to physical layer timing inaccuracies, such as propagation delay.
\end{itemize}
The maximum covert injected delay, as discussed in Section~\ref{subsec:calc_tinj}, is calculated using the precision of node-guardian clock synchronization, \PiNG{} and the drift offset, $\Gamma$. However, since $\Gamma$ influences the choice of \PiNG{}, we set the test configurations directly for different values of \PiNG{}.

We set the values of \PiNN{}, and correspondingly \PiNG{}, by varying the maximum propagation delay using the \textit{channel.delay} parameter. The required configuration values are given in Table~\ref{tab:test1_config}. We adapt the value of \textit{pOffsetCorrectionOut} for each configuration accordingly.

\begin{table}[h!]
\centering
\setlength{\arrayrulewidth}{0.5mm}
\setlength{\tabcolsep}{3pt}
\renewcommand{\arraystretch}{0.8}
\begin{tabular}{|p{1.3cm}|p{3.2cm}|p{2.5cm}|}
\hline
\textbf{Config. no.} & \textbf{aAssumedPrecision}/$\bm{\PiNG{}}$ & \textbf{channel.delay (ns)} \\ \hline
1 & 1 $MT$/ 80 $\mu T$ & 127.94 \\ \hline
2 & 0.75 $MT$/ 60 $\mu T$ & 65.44 \\ \hline
3 & 0.5 $MT$/ 40 $\mu T$ & 2.94 \\ \hline
\end{tabular}
\caption{Network parameters for Experiment~\ref{subsec:test_tinj} configurations.}
\label{tab:test1_config}
\end{table}

For each configuration, we report the \emph{percentage of undetected attacks}, which we define as the ratio of simulation repetitions in which the attack remains undetected to the total number of repetitions (500). 

Figure~\ref{fig:test_1a} depicts the results for the first configuration, with precision $\PiNG{} = \textit{1 macrotick (MT)}$. Similarly, the figures~\ref{fig:test_1b}~and~\ref{fig:test_1c}, depict the results for $\PiNG{} = \textit{0.75 MT}$ and $\PiNG{} = \textit{0.5 MT}$, respectively. The maximum covert injected delay calculated for each value of \PiNG{} has been indicated for reference. The results indicate that as long as \tinj{} complies with Eq.~\ref{eq:tnj_max}, \emph{unitE} remains undetected. Further, an attacker may inject a delay larger than the maximum covert value and still remain undetected. For clock deviation between the node and the guardian smaller than $\Gamma$, \tinj{} values exceeding the theoretical maximum can still result in a CS signal that passes the acceptance window check, since Eq.~\ref{eq:tnj_max}, being pessimistic, assumes the maximum possible clock drift between the node and the guardian. The percentage of undetected attacks, for a particular value of injected delay greater than the maximum, depends on \PiNG{}.

The plots in Figures~\ref{fig:test_1a} through \ref{fig:test_1c}, evidently, do not depict strictly decreasing functions. Though this behavior is only observed for the plots in Figures~\ref{fig:test_1a}~and~\ref{fig:test_1c}, for each value of \PiNG{}, there is a small interval of \tinj{} in the approximate range of $[0.97,1.02] \times \PiNG{}$ for our experiment where the percentage of undetected attacks may increase. The reason lies in the clock deviation calculation at the guardian during clock synchronization. 
To simulate a real-time clock, 
we convert the OMNeT++ simulation time to guardian local clock microticks using a ceiling operation, introducing a possible error of $\pm 1$ microtick. Considering the randomness of simulated clock drift and the ceiling operation error, the node and guardian clock deviation can reach or exceed \PiNG{}, for \tinj{} in the range $[0.97,1.02]\times \PiNG{}$. This can result in a short spike in the percentage of undetected attacks during the interval.

\pgfplotsset{width=7.5cm,compat=1.16}

\begin{figure}[h]
\centering
\begin{tikzpicture}
  \begin{axis}[
    xmax=120,
    xlabel={$T_{inj}/\pi_{NG}$ (\%)},
    ylabel={Percentage of undetected attacks(\%)},
    xtick={0,25,50,75,100,125},
    minor x tick num=4,
    minor y tick num=1,
    extra x ticks={76,77,78,79,80,81,82,83,84,85,86,87,88,89,90,91,92,93,94,95,96,97,98,99,100,101,102,103,104,105,106,107,108,109,110,111,112,113,114,115,116,117,118,119,120},
    extra x tick labels={,,,,,,,,,,,,,89,,,,,,,,,,,,,,,,,,,,,,,,,,,,,,,},
    legend style={
        at={(0.05,0.05)},
        anchor=south west,
        font=\small,           
        scale=0.95              
        },
  ]

    \addplot[
      blue,
      mark=o,
      dashed,
      mark size=1.5pt,
      mark options={solid}
    ] table[
      col sep=comma,
      x=T_injected,
      y={Percentage of undetected attacks}
    ]{Stats_PI_v_Tinj_Pi=1MT.csv};
    \addlegendentry{$\PiNG{}=\text{1 }MT$}

  \end{axis}
\end{tikzpicture}
\caption{For $\PiNG{}=\text{1 }MT$, percentage of undetected attacks for increasing values of injected delay.}
\label{fig:test_1a}
\end{figure}

\begin{figure}[h!]
\centering
\begin{tikzpicture}
  \begin{axis}[
    xmax=120,
    xlabel={$T_{inj}/\pi_{NG}$ (\%)},
    ylabel={Percentage of undetected attacks(\%)},
    xtick={0,25,50,75,100,125},
    minor x tick num=4,
    minor y tick num=1,
    extra x ticks={76,77,78,79,80,81,82,83,84,85,86,87,88,89,90,91,92,93,94,95,96,97,98,99,100,101,102,103,104,105,106,107,108,109,110,111,112,113,114,115,116,117,118,119,120},
    extra x tick labels={,,,,,,,,,85,,,,,,,,,,,,,,,,,,,,,,,,,,,,,,,,,,,},
    legend style={
        at={(0.05,0.05)},
        anchor=south west,
        font=\small,           
        scale=0.95              
        },
  ]

    \addplot[
      red,
      mark=o,
      dashed,
      mark size=1.5pt,
      mark options={solid}
    ] table[
      col sep=comma,
      x=T_injected,
      y={Percentage of undetected attacks}
    ]{Stats_PI_v_Tinj_Pi=0.75MT.csv};
    \addlegendentry{$\PiNG{}=\text{0.75 }MT$}

  \end{axis}
\end{tikzpicture}
\caption{For $\PiNG{}=\text{0.75 }MT$, percentage of undetected attacks for increasing values of injected delay.}
\label{fig:test_1b}
\vspace{-0.6cm}\end{figure}

\begin{figure}[h!]
\centering
\begin{tikzpicture}
  \begin{axis}[
    xmax=120,
    xlabel={$T_{inj}/\pi_{NG}$ (\%)},
    ylabel={Percentage of undetected attacks(\%)},
    xtick={0,25,50,100,125},
    minor x tick num=4,
    minor y tick num=1,
    extra x ticks={5,10,15,20,30,35,40,45,55,60,65,70,75,76,77,78,79,80,81,82,83,84,85,86,87,88,89,90,91,92,93,94,95,96,97,98,99,100,101,102,103,104,105,106,107,108,109,110,111,112,113,114,115,116,117,118,119,120},
    extra x tick labels={,,,,,,,,,,,,,,,78,,,,,,,,,,,,,,,,,,,,,,,,,,,,,,,,,,,,,,,,,,},
    legend style={
        at={(0.05,0.05)},
        anchor=south west,
        font=\small,           
        scale=0.95              
        },
  ]

    \addplot[
      magenta,
      mark=o,
      dashed,
      mark size=1.5pt,
      mark options={solid}
    ] table[
      col sep=comma,
      x=T_injected,
      y={Percentage of undetected attacks}
    ]{Stats_PI_v_Tinj_Pi=0.5MT.csv};
    \addlegendentry{$\PiNG{}=\text{0.5 }MT$}

  \end{axis}
\end{tikzpicture}
\caption{For $\PiNG{}=\text{0.5 }MT$, percentage of undetected attacks for increasing values of injected delay.}
\label{fig:test_1c}
\end{figure}

\subsection{Required Clock Deviation for Slot Overlap}\label{subsec:test_ntc}

The purpose of the second experiment is to demonstrate successful target slot attacks that remain undetected and invalidate specific target frames through collision with messages from the attacker.
We test seven simulation configurations, and in each configuration \emph{unitE} targets a specific slot in the schedule. 

During each simulation, with a specific victim node, \emph{unitE} (Slot $5$) statically calculates the required clock deviation for slot overlap with the target slot and performs acceptance window attacks against its guardian using the maximum covert injected delay. Considering the values for \PiNN{} (aAssumedPrecision) and Drift Offset ($\Gamma$), the maximum covert injected delay is $0.89\times\PiNG{}$. In each communication cycle, \emph{unitE} calculates the clock deviation from the victim node using the received message and compares it to the required clock deviation for slot overlap. If overlap is identified, \emph{unitE} transmits its message during the scheduled slot to invalidate the critical message sent during the target slot. 

We consider the attack successful when it disrupts the critical message transmitted during the target slot. Table~\ref{tab:test2} summarizes the results, showing the \emph{percentage of successful attacks}, calculated as the ratio of repetitions with successful disruption of the critical message to the total number of repetitions (500) for each selected victim node.

\begin{table}[h!]
\centering
\setlength{\arrayrulewidth}{0.2mm}
\setlength{\tabcolsep}{3.2pt}
\renewcommand{\arraystretch}{0.8}
\begin{tabular}{|p{1cm}|p{1.2cm}|p{2cm}|p{2.3cm}|}
\hline
\textbf{Config. no.} & \textbf{Target slot} & \textbf{Required clock deviation (MT)} &  \textbf{\textit{successful} attacks (\%)}   \\
\hline
1 & 1 & 15 & 100\% \\
\hline
2 & 2 & 11 & 100\% \\
\hline
3 & 3 & 7 & 100\% \\
\hline
4 & 4 & 3  & 100\% \\
\hline
5 & 6 & 5 & 100\% \\
\hline
6 & 7 & 9 & 100\% \\
\hline
7 & 8 & 13 & 100\% \\
\hline
\end{tabular}
\caption{Results for the required clock deviation and percentage of successful target slot attacks.}
\label{tab:test2}
\vspace{-0.5cm}
\end{table}
The results indicate that, for each target slot (7 configurations) during every repetition (500), \emph{unitE} identifies slot overlap accurately despite random clock drift and its message causes a collision with the target frame. 

The second experiment supports the applicability of the attack parameters maximum injected delay, the required clock deviation for slot overlap (at the time of initiation of the attack), and the attack procedure successfully employing both in mounting target slot attacks. When the parameters are appropriately chosen, the attack remains undetected by the guardian, continuously injects covert delay and monitors the relative clock deviation with respect to the target until a slot overlap is identified, which makes the collision with the target critical frame certain, as explained in Section~\ref{sec:attack}.

\section{Countermeasures}\label{sec:countermeasures}
The chosen clock‑synchronization precision between a node and its network‑agnostic guardian does not prevent targeted‑slot attacks, as demonstrated by the theoretical result in Section~\ref{sec:ngprecision}. We complement this analytic finding with the simulation study in Section~\ref{subsec:test_tinj}, which evaluates the attack’s covertness as a function of the injected-delay parameter \tinj{} for a fixed \PiNN{}.

Alkoudsi et al.~\cite{syncguard2023, ttarejuv2022} investigated the coordination of proactive and reactive node rejuvenation as a mechanism to maintain synchrony among network-agnostic guardians in the presence of malicious nodes. 
Assuming a secure start-up service, guardian-enforced rejuvenation establishes an initial \emph{interval of correctness} post node rejuvenation. 
During this interval, nodes are deemed trustworthy, enabling guardians to safely synchronize their clock offsets. Consequently, periodic proactive node rejuvenation systematically eliminates the presence and effects of an attacker. Simultaneously, the exchange of authenticated messages and the utilization of a membership protocol allow guardians to detect and reactively rejuvenate compromised nodes, albeit at the cost of increased communication bandwidth.

In the specific context of the targeted slot attack, which relies on a cumulative injection effect to gradually desynchronize guardians, node rejuvenation serves a critical role:   
Proactive rejuvenation \cite{DBLP:conf/sac/SousaNV06, ttarejuv2022} effectively constrains the attacker's ability to align with a victim slot by upper-bounding the time available for an attacker to accumulate drift. 
Furthermore, having guardians exchange and cross-verify system-level data, such as TDMA positioning and membership information enables guardians to identify compromised nodes and enforce their reactive rejuvenation.

We suspect that proactive rejuvenation alone may suffice for certain system configurations, depending on the structure of the communication schedule, the clock synchronization protocol, and the time required to compromise a node. 
Relying solely on proactive measures could facilitate the design of more efficient, yet still time-domain attack-resilient, network-agnostic guardians. 

\section{Conclusion}\label{sec:conclusion}
Network-agnostic guardians are trusted components that enforce predetermined schedules and ensure temporal fault containment, while offering significant advantages over network-aware variants in terms of cost, complexity and scalability. However, a major caveat is their reliance on the controlled node for clock synchronization.

In this paper, we advanced the understanding of timing attacks against network-agnostic guardians by presenting a target slot attack that invalidates a specific message in the time-triggered schedule.

We established a lower bound on the precision of external clock synchronization between a node and its network-agnostic guardian, as well as identified the maximum delay that can be covertly injected alongside the node's clock synchronization signal to its guardian. 
By systematically injecting such delays, the adversary can alter the guardian's perception of global time until its designated slot aligns with the target slot, 
resulting in the malicious node's message colliding with and invalidating the critical message sent during the target slot. 
Simulation results from an eight-node FlexRay network validate the attack's effectiveness with 100\% success and demonstrate its significant impact on message delivery within time-triggered communication protocols that use network-agnostic guardians.

The identified vulnerability of network-agnostic guardians to timing attacks underscores the importance of coordination among guardians. 
Since communication between guardians is costly, 
we will explore, in future work, the interplay between enforcing proactive node rejuvenation by network-agnostic guardians and the necessity for explicit communication among guardians.

\bibliographystyle{plain} 
\bibliography{internal}	

\end{document}